\documentclass[preprint,aps,pre]{revtex4-1}


%
%



\usepackage{latexsym,enumerate,amsmath,amssymb}
\usepackage{amsthm}

\newtheorem{theorem}{Theorem}
\newtheorem{lemma}{Lemma}

\newtheorem{corollary}[theorem]{Corollary}

\def\eg{{\it e.g.}}

\usepackage{graphicx}

\thinmuskip = 0.5\thinmuskip \medmuskip = 0.5\medmuskip
\thickmuskip = 0.5\thickmuskip \arraycolsep = 0.3\arraycolsep

\newcommand{\wt}[1]{\widetilde{#1}}

\begin{document}
\preprint{APS/123-QED}

\title{The Mori-Zwanzig formalism for the derivation of a fluctuating heat conduction model from molecular dynamics}

\author{Weiqi Chu}
\email{wzc122@psu.edu}
\affiliation{Department of Mathematics, the Pennsylvania State University, University Park, PA 16802-6400, USA.}%

\author{ Xiantao Li}
\email{xli@math.psu.edu}
\affiliation{Department of Mathematics, the Pennsylvania State University, University Park, PA 16802-6400, USA.}%

\date{\today}

\begin{abstract}
Energy transport equations are derived directly from full molecular dynamics models as coarse-grained description. With the local energy chosen as the coarse-grained variables, we apply the Mori-Zwanzig formalism to derive a reduced model, in the form of a generalized Langevin equation.  A Markovian embedding technique is then introduced to eliminate the history dependence. In sharp contrast to conventional energy transport models, this derivation yields  {\it stochastic} dynamics models for the spatially averaged energy. We discuss the approximation of the random force using both additive and multiplicative noises, to ensure the correct statistics of the solution. 
\end{abstract}

\pacs{05.10.-a,05.10.Gg,02.30.Zz}

\maketitle

\section{Introduction}
During the past two decades, there has been rapidly growing interest in modeling heat transport at the microscopic scale. Such renewed interest has been driven by the progress in designing and manufacturing micro mechanical and electrical devices, for which thermal conduction properties have great influences on the performance and reliability. As the size of electrical and mechanical devices is decreased to the micron and sub-micron scales, they often exhibit heat conduction properties that are quite different from the observations familiar at macroscopic level. 
For example, extraordinarily large heat conductivity for carbon nano-tubes has been reported by many groups \cite{Blandin-2011,KiShMaMc-2001,VoZh-2012}, the conductivity of two-dimensional systems shows a strong dependence on the system size \cite{WaHuLi12}, which indicates the failure of the  Fourier's law, and heat pulses are observed in experiments \cite{tzou1995experimental}, which are typical behavior of wave equations, etc.   

From a modeling viewpoint, a natural approach to incorporate some of the observed behavior is to modify the traditional heat equation, \eg, by introducing nonlocal terms \cite{tzou1995experimental,Tzou-2001}. This approach is simple, but quite ad hoc. On the other hand, one may rely on the more fundamental description of phonons, which are the energy carriers in solids. The distribution of phonons is governed by the Peierles-Boltzmann (PB) equation \cite{Callaway59,Peierles55,Ziman01}, or simplified PB equations \cite{Chen01,JoMa93,Majumdar93,SvJuGo01,XuLi12}, where the cubic collision term is replaced by a relaxation term, similar to the Bhatnagar--Gross--Krook model \cite{bhatnagar1954model} in kinetic theory. In general, however, the computation of the full PB model is still an open challenge due to the high dimensionality.

Thus far, the most popular approach to study heat conduction is direct molecular dynamics (MD) simulations, which often mimics the experimental setup. Given an interatomic potential $V$, either  empirically constructed or derived from more fundamental considerations, a MD model is typically expressed in terms of the Newton's equations of motion.
There are many well established techniques for MD simulations \cite{AlTi89,FrSm02}. In particular, many interatomic potentials have been developed, which have shown great promises in predicting defect structures and phase configurations. Of particular interest to the present paper is that the phonon spectrum computed from some of the models is in good agreement with results from experiments or first-principle calculations (\eg, see \cite{SiPaSa97,WiMiHa2006}).   

In spite of the many constributions that have recently appeared to report the studies of  heat conduction processes (\eg, \cite{gill2006rapid,jolley2009modelling,Che2000,ChZhLi10,DoGa09,HeCh08,McKa06,SchPhKe02,VoCh00,Wang07,WaHuLi12,McKa2003,PoMa-2006,VoZh-2012}),
direct MD models have several serious limitations when applied to heat conduction problems. The first obvious limitation is the computational cost. Typical quantities of interest are expressed as ensemble averages or two-point correlations. For a non-equilibrium process such as the transient heat conduction process, the ensemble averages may not be replaced by time averages, at least very little theory exists to support such a practice. Therefore, many copies, typically tens of thousands, need to be created to average out statistical fluctuations. In addition, the size of the system (and time scale) that can be modeled by direct MD simulations is small, often  comparable or smaller than the mean free path of phonons. Most current MD studies are restricted to quasi-one-dimensional systems, \eg, nanowires \cite{DoGa09,li2003thermal,volz1999molecular,dames2004theoretical,wang2009thermal,donadio2009atomistic,lu2002size,chen2005effect,chen2004molecular}, nanotubes \cite{ChOk-2008,Fujii-etal-2005,KiShMaMc-2001,PoMa-2006,VoZh-2012,hone1999thermal,maruyama2002molecular,yu2005thermal,maruyama2003molecular,osman2001temperature}, and nanoribbons \cite{hu2009thermal,savin2010suppression,evans2010thermal}. They have motivated a lot of recent effort to understand the origin and limitations of Fourier's Law \cite{lepri1997heat,lepri2003thermal,lepri1998anomalous,bonetto2000fourier,garrido2001simple,bernardin2005fourier}.
Furthermore, most of the studies have been focused on the dependence of the heat conductivity on the geometry, length and temperature of the system. A typical setup is to connect the boundaries to two heat bath with different temperature, modeled by stochastic (Langevin) or deterministic (Nos\`e-Hoover \cite{nose1984molecular}) thermostats. 
The MD equations are solved to drive the system to a steady state, at which point the heat flux can be measured to estimate the heat conductivity. More general transient heat conduction problems, however, would require more substantial effort. 

Secondly, it is often straightforward to incorporate quantities such as displacement, velocity, temperature and pressure into MD simulations as constraints. However, temperature gradient is very difficult to impose. The temperature gradient that can be
imposed is  usually on the order of $10^9 - 10^8 K/m$, which is too large to model realistic systems. It is unclear whether results obtained this way can be appropriately extrapolated to the correct regime. Another way to interpret this is that a realistic temperature gradient, when applied to molecular systems, is too small to be incorporated accurately by the numerical methods. One is often confronted with the issue of small signal-to-noise ratio, a problem that also arises in fluid mechanics problems \cite{evans1984nonlinear,EvMo08,morriss1987application}.


This paper is strongly motivated by the above-mentioned issues, and the purpose  is to present a coarse-grained  (CG) model to alleviate these  fundamental modeling difficulties. The CG procedure drastically reduces the number of degrees of freedom and offers a practical alternative. Coarse-graining methodologies have found enormous application in material science problems and biological problems \cite{RuBr05,RuBr98,SiLe07,ChSt05,IzVo06,KaMaVl03,CoHoSh90,Li2009c,baaden_coarse-grain_2013,gohlke_natural_2006,golubkov_generalized_2006,gramada_coarse-graining_2011,nielsen_recent_2010,noid_perspective:_2013,noid_multiscale_2008,poulain_insights_2008,praprotnik_multiscale_2008,riniker_developing_2012,rudzinski_role_2012,shi_coarse-graining_2008,stepanova_dynamics_2007,zhang_systematic_2008}. Many CG  models have been developed and they have shown great promise in reducing the computational cost and efficiently capturing the primary quantities of interest.  However,  almost all the existing CG molecular models are focused on finding the effective potentials, known as the potential of mean forces,  at a {\it constant} temperature.  These existing CG models are in the similar form as the MD models with possible addition of damping terms or random forces. They typically describe only the time evolution of the averaged position and momentum.  

The current approach works with the local energy and aims at the energy transport process.
Starting with a locally averaged internal energy as CG variables, we use the Mori-Zwanzig formalism  \cite{Mori65,Zwanzig73} to first derive an {\it exact} equation for these variables. In particular, we choose Mori's orthogonal projection  \cite{Mori65} to project the equations to the subspace spanned by the CG variables. For such CG variables, this projection yields a memory term, which exhibits a simple form of a convolution in time. To alleviate the effort to compute the memory term at every step, we use the Markovian embedding techniques, recently developed in \cite{ceriotti2010colored,lei2016data,ma2016derivation}, to approximate the memory using an extended system of differential equations with {\it no} memory.  The coefficients for these approximations can be determined based on the statistics of the CG variables. 

In principle, the noise term can be averaged out by simply taking the average of every term in the CG model. This is a particular advantage of the Mori's projection \cite{ChSt05}. However,   motivated by the crucial observation that many mechanical systems at the micron scale or smaller are subject to strong fluctuations, we will forgo such an averaging step, and work with the models {\it with}  the random noise. This results in an energy transport model {\it with fluctuation}, represented by a system of stochastic differential equations (SDE).  Consequently, the solutions are expected to be stochastic in nature. An important issue naturally arises:  How does one guarantee that the corresponding solution has the correct statistics? We first consider an approximation of the random force by an additive white noise, in which case the solution should have Gaussian statistics. Unfortunately,  by examining a one-dimensional chain model, we have found that the correct statistics behaves more like a Gamma distribution. In particular, the energy must have a lower bound.  Although the approximation by additive noise yields reasonable approximations to the time correlations, the probability density function (PDF) of the solution is incorrect.

To ensure that correct PDF is obtained, we propose to approximate the noise by a {\it multiplicative} noise. In this case, the diffusion constant depends on the solution itself. We determine the diffusion coefficient by solving the steady-state Fokker-Planck equation. We are able to find a diagonal matrix for the diffusion coefficient matrix such that the Gamma distribution is an equilibrium probability density. As a further extension, we introduce a higher order approximation where both the CG energy variables and their time derivatives have the correct PDF. This leads to a Langevin type of equation with multiplicative noise.

We point out that one existing stochastic heat conduction model has been proposed by Ripoll et al in \cite{ripoll1998dissipative}, as an extension of the dissipative particle dynamics (DPD) \cite{espanol1995statistical}. The model was postulated, rather than derived from MD.

The rest of the paper is organized as follows: In section \ref{math}, we discuss the mathematical derivation, and examine the general properties of the generalized Langevin equations derived from the Mori's projection.  We introduce the Markovian embedding technique for the approximation of the memory term. Then in section \ref{1d}, we present the approximation of the random noise using a one-dimensional system  as a example.

\section{Mathematical derivation}\label{math}
\subsection{The general projection formalism}
Our starting point is an all-atom description, which embodies the detailed interactions among all the atoms in the system. More specifically, let $x$ and $v$ be displacements and velocities of the atoms respectively;  $x, v \in \mathbb{R}^{dN}$ with $d$ being the space dimension and  $N$ being the total number of atoms. The dynamics follows the Newton's second law,
\begin{equation}
\left\{
\begin{aligned}
\dot{x} &= v, \quad & x(0)=x_{0},\\
m\dot{v} &= f(x)=-\frac{\partial V(x)}{\partial x},\quad & v(0)=v_{0}, \\
\end{aligned} \right.
\label{eq: newton2}
\end{equation}
where $V(x)$ is the potential energy of the system. Solutions $\{x(t),v(t)| t\ge 0\}$ can be viewed as trajectories in the phase space  with an ensemble of initial states $(x_0,v_0).$ Following the notations in \cite{EvMo08}, we use $\Gamma=\mathbb{R}^{2dN}$ for the phase space and define the propagating operator $\mathcal{L}$ on $\Gamma$ as,
\begin{equation}
\mathcal{L} := v_{0}\cdot\frac{\partial } {\partial x_{0}} +\frac{f(x_{0})}{m}\cdot \frac{\partial}{\partial v_{0}}.
\end{equation}


We define a Hilbert space $\cal{H}$ equipped with an  inner product weighted by a probability density $\rho_0$. This corresponds to the initial preparation of the system. For any two $n$-dimensional  functions, $F, G: \Gamma \to \mathbb{R}^n,$ we define the average and the correlation matrix as follows,
\begin{equation}\begin{aligned}
 &\left<F\right>_{i} \overset{\Delta}{=} \int_{\Gamma} F_i(z) \rho_0(z) dz, \quad 1 \le i \le n, \\
 &\left<F, G\right>_{ij} \overset{\Delta}{=} \int_{\Gamma} F_i(z) G_j(z) \rho_0(z) dz, \quad 1 \le i, j \le n.
 \label{eq: minnerproduct}
\end{aligned}\end{equation} 

Suppose ${a}: \Gamma \to \mathbb{R}^{n}$ is a quantity of interest (QOI) and depends only on phase space variables $x$ and $v$ explicitly. For convenience, we work with a somewhat abused notation,
\begin{equation}
{a}(t) \overset{\Delta}{=} {a}\big(x(t),v(t)\big) \text{ and } {a} \overset{\Delta}{=} {a}(0).
\end{equation}
For the last part of our definition, we have followed the convention in  \cite{EvMo08}.

Our goal is to derive a reduced equation for the QOI $a(t)$, also known as coarse-grain (CG) variables. For this purpose, we follow the Mori-Zwanzig (MZ) procedure \cite{ChHaKu02,ChKaKu98,ChSt05,Mori65,Zwanzig73}. A key step in the MZ formulation is  a projection operator $\mathcal{P}$ that maps functions of $\Gamma$ to those of $a$.  We adopt the orthogonal projection suggested by Mori \cite{Mori65}. More specifically, for any function $G:\Gamma\to\mathbb{R}^{n}$, we define,
\begin{equation}
\mathcal{P}G\overset{\Delta}{=}  \left< G,a \right>M^{-1}a,
\end{equation}
where $M^{-1}$ is the inverse of matrix $M=\left< a,a\right>$, and the inner product of $G$ and $a$ is defined according to \eqref{eq: minnerproduct}.
We also define $\mathcal{Q}$ as the complementary operator of $\mathcal{P}$, i.e. $\mathcal{Q}=\mathcal{I}-\mathcal{P}$. Note that the covariance matrix $M$ only involves the one-point statistics of $a$ and can be guaranteed to be non-singular by carefully selecting the CG variables. In practice, this corresponds to the appropriate choice of $a$ so that the CG variables are not redundant. Even in the case when  the CG variables are redundant, e.g., when there is energy conservation and the matrix $M$ becomes singular, the projection can still be well defined by interpreting $M^{-1}$ as the pseudo-inverse.

Once the projection operator is in place, the Mori-Zwanzig formalism can be invoked, and the following generalized Langevin equation (GLE)  can be derived \cite{Mori65},
\begin{equation}
\dot{a}(t) = \Omega a(t) - \int_{0}^{t} \theta(t-s)a(s) ds + F(t),
\label{eq: GLE}
\end{equation}
where 
\begin{equation}\label{mz}
\Omega=\left<\mathcal{L}a,a\right>M^{-1},\; F(t) = e^{t\mathcal{QL}}\mathcal{QL}a, {\text{ and }}\; \theta(t)=-\left<\mathcal{L}F(t),a\right>M^{-1}. 
\end{equation}

For the choice of $\rho_0,$ we pick an equilibrium probability density. More specifically, let $\mathcal{L}^{*}$ be the $L^{2}$-adjoint operator of $\mathcal{L}$. Then for any  equilibrium density $\rho_{eq}$ that satisfies $\mathcal{L}^{*}\rho_{eq}=0$ we can set $\rho_0=\rho_{eq}.$  When the system is near equilibrium, this serves as the first approximation. Further corrections can be made using the linear response approach \cite{Toda-Kubo-2}. For a Hamiltonian dynamics like \eqref{eq: newton2}, we have $\mathcal{L}^{*}=-\mathcal{L}$  \cite{EvMo08}. We pick the canonical ensemble for $\rho_{eq},$
\begin{equation}
 \rho_{eq}= \frac{1}{Z} e^{-\beta H}.
\end{equation}
In principle, other forms of the probability density, especially those obtained from the maximum entropy principle \cite{Zwanzigbook,Zwan80}, can be used as well.

Several properties can be deduced from the derivation. They are summarized in the following theorem.

\begin{theorem}
Assuming that $\left<a\right>=0$, then the following properties hold,
\begin{equation}\label{eq: prop-F}
\begin{aligned}
\left<F(t)\right> &=0, \qquad \forall \; t\ge 0,\\
\theta(t_1 -t_2) &=  \left<F(t_1),F(t_2) \right>M^{-1}, \quad \forall \; t_{1},t_{2}\ge 0 \text{ and } t_{1}\ge t_{2},\\
\left<F(t),a\right> &=0, \qquad \forall \; t\ge 0.\\
\end{aligned}
\end{equation}
\end{theorem}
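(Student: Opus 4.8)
The three identities all follow from a small set of structural facts about the operators and the weighted inner product, which I would establish first. The key preliminaries are: (i) $\mathcal{L}$ is skew-adjoint with respect to $\langle\cdot,\cdot\rangle$, i.e.\ $\langle\mathcal{L}F,G\rangle=-\langle F,\mathcal{L}G\rangle$, which follows from $\mathcal{L}^{*}=-\mathcal{L}$ together with $\mathcal{L}\rho_{eq}=0$, the Leibniz identity $\mathcal{L}(G\rho_{eq})=(\mathcal{L}G)\rho_{eq}$ absorbing the weight; (ii) the Mori projection $\mathcal{P}$, and hence $\mathcal{Q}=\mathcal{I}-\mathcal{P}$, is self-adjoint with respect to $\langle\cdot,\cdot\rangle$ and idempotent, both being short computations from $\mathcal{P}G=\langle G,a\rangle M^{-1}a$ and the symmetry of $M$; (iii) the random force stays in the range of $\mathcal{Q}$, i.e.\ $\mathcal{P}F(t)=0$ for all $t\ge0$, obtained by expanding $F(t)=e^{t\mathcal{QL}}\mathcal{QL}a=\sum_{n\ge0}\frac{t^{n}}{n!}(\mathcal{QL})^{n+1}a$ and noting $\mathcal{P}(\mathcal{QL})^{n+1}a=\mathcal{P}\mathcal{Q}[\cdots]=0$ since $\mathcal{P}\mathcal{Q}=0$. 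I would also record the standard consequence of Liouville's theorem (stationarity of $\rho_{eq}$) that $\langle\mathcal{L}G\rangle=0$ for every observable $G$.

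Given these, properties one and three are immediate. For the third, (iii) says $0=\mathcal{P}F(t)=\langle F(t),a\rangle M^{-1}a$ as a function on $\Gamma$; pairing with $a$ and using $M^{-1}\langle a,a\rangle=M^{-1}M=I$ gives $\langle F(t),a\rangle=0$. For the first, I would differentiate in $t$: $\frac{d}{dt}\langle F(t)\rangle=\langle\mathcal{QL}F(t)\rangle=\langle\mathcal{L}F(t)\rangle-\langle\mathcal{P}\mathcal{L}F(t)\rangle$; the first term vanishes by $\langle\mathcal{L}(\cdot)\rangle=0$ and the second vanishes because $\langle\mathcal{P}G\rangle=\langle G,a\rangle M^{-1}\langle a\rangle$ is proportional to $\langle a\rangle=0$. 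Since $\langle F(0)\rangle=\langle\mathcal{QL}a\rangle=0$ for the same two reasons, $\langle F(t)\rangle\equiv0$.

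The second identity is the crux. I would set $g(t_{1},t_{2})=\langle F(t_{1}),F(t_{2})\rangle$ and differentiate in each argument. Using skew-adjointness of $\mathcal{L}$, self-adjointness of $\mathcal{Q}$, and $\mathcal{Q}F(t)=F(t)$ from (iii), this should give $\partial_{t_{1}}g=-\langle F(t_{1}),\mathcal{L}F(t_{2})\rangle$ and $\partial_{t_{2}}g=+\langle F(t_{1}),\mathcal{L}F(t_{2})\rangle$, so $\partial_{t_{1}}g+\partial_{t_{2}}g=0$ and $g$ is a function of $\tau=t_{1}-t_{2}$ alone. Evaluating at $t_{2}=0$ then yields $g=\langle F(\tau),\mathcal{QL}a\rangle=\langle F(\tau),\mathcal{L}a\rangle=-\langle\mathcal{L}F(\tau),a\rangle$, again moving $\mathcal{Q}$ across by its self-adjointness and integrating by parts; multiplying on the right by $M^{-1}$ produces exactly $\theta(\tau)$ as defined in \eqref{mz}. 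The hypothesis $t_{1}\ge t_{2}$ enters only to guarantee $\tau\ge0$, so that $F(\tau)=e^{\tau\mathcal{QL}}\mathcal{QL}a$ is given by the forward propagator.

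The main obstacle is handling the semigroup $e^{t\mathcal{QL}}$, whose generator $\mathcal{QL}$ is neither self- nor skew-adjoint, so it cannot simply be shuttled across the inner product. The device of differentiating $g$ in both time variables sidesteps ever computing the adjoint semigroup $(e^{t\mathcal{QL}})^{*}$, but it still rests essentially on the closure property in (iii) that $F(t)$ never leaves the range of $\mathcal{Q}$. A secondary, purely technical point, implicit throughout, is that the integrations by parts and the use of Liouville's theorem require sufficient decay of $\rho_{eq}$ and the observables so that boundary terms vanish; this is the standing assumption under which the Mori--Zwanzig identities are derived.
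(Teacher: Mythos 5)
Your proof is correct, and for the first and third identities it is essentially the paper's argument: both rest on the skew-adjointness of $\mathcal{L}$, the self-adjointness and idempotency of $\mathcal{P}$ and $\mathcal{Q}$, the fact that $F(t)$ remains in the range of $\mathcal{Q}$, and $\langle a\rangle=0$ (your derivative-in-$t$ computation for $\langle F(t)\rangle$ is a trivial variant of the paper's direct evaluation of $\langle \mathcal{QL}e^{t\mathcal{QL}}a\rangle$). Where you genuinely diverge is the fluctuation--dissipation identity. The paper starts from $\theta(t_1-t_2)=-\langle\mathcal{L}F(t_1-t_2),a\rangle M^{-1}$ and works forward by explicitly invoking the adjoint semigroup, writing $\langle F(t_1-t_2),\mathcal{L}a\rangle=\langle F(t_1),e^{t_2\mathcal{LQ}}\mathcal{L}a\rangle$ via $(e^{t_2\mathcal{QL}})^{*}=e^{-t_2\mathcal{LQ}}$ and then using the intertwining relation $\mathcal{Q}e^{t_2\mathcal{LQ}}\mathcal{L}a=e^{t_2\mathcal{QL}}\mathcal{QL}a=F(t_2)$. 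You instead prove stationarity of the two-time correlation $g(t_1,t_2)=\langle F(t_1),F(t_2)\rangle$ by showing $\partial_{t_1}g+\partial_{t_2}g=0$ and then evaluate on the boundary $t_2=0$; this deliberately avoids ever forming $(e^{t\mathcal{QL}})^{*}$, at the cost of requiring the orthogonal dynamics $\dot F=\mathcal{QL}F$ to justify the differentiation. The two arguments are equivalent in content --- your transport-equation device is essentially an infinitesimal version of the paper's semigroup manipulation --- and both are valid at the formal level of rigor adopted here; your version makes the role of the hypothesis $t_1\ge t_2$ (so that the characteristic line meets the boundary at $(\tau,0)$ with $\tau\ge0$) slightly more transparent.
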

\begin{proof}
Note that with the inner product defined above, the adjoint operator of $\mathcal{L}$ is $-\mathcal{L}$, and $\mathcal{P}$ and $\mathcal{Q}$ are self-adjoint, i.e., $\left< \mathcal{L}b,c\right>=-\left< b,\mathcal{L}c\right>$ and $\left< \mathcal{P}b,c\right>=\left< b,\mathcal{P}c\right>$ for any $b,c: \Gamma \to \mathbb{R}^n$. 

Now for the first property, we proceed as follows,
\begin{equation}\begin{aligned}
 \left<F(t)\right> &=  \left<\mathcal{QL}e^{t\mathcal{QL}}a\right> =  \left< \mathcal{L}e^{t\mathcal{QL}}a\right> - \left<\mathcal{PL}e^{t\mathcal{QL}}a\right> \\ &= -\left< e^{t\mathcal{QL}}a\mathcal{L}1\right> - \left<\mathcal{L}e^{t\mathcal{QL}}a,a\right>M^{-1}\left< a\right> = 0.
\end{aligned}\end{equation}

For the second property, we start with the second equation in \eqref{mz} and we get, 
\begin{equation}\begin{aligned}
\theta(t_{1}-t_{2}) &=-\left<\mathcal{L}F(t_{1}-t_{2}),a\right>M^{-1} = \left<e^{(t_{1}-t_{2})\mathcal{QL}}\mathcal{QL}a,\mathcal{L}a\right>M^{-1}\\
&=\left<F(t_{1}),e^{t_{2}\mathcal{LQ}}\mathcal{L}a\right>M^{-1}=\left<F(t_{1}),\mathcal{Q}e^{t_{2}\mathcal{LQ}}\mathcal{L}a\right>M^{-1} \\
&= \left<F(t_1),F(t_2) \right>M^{-1}.
\end{aligned}\end{equation} 

Finally, since $\mathcal{P}a =a$ and $\mathcal{Q}F(t)=F(t)$, one can easily verify that,
\begin{equation}
\left< F(t),a\right> = \left< \mathcal{Q}F(t),\mathcal{P}a\right> = \left< F(t),\mathcal{QP}a\right> = 0.
\end{equation}
 \end{proof}

The first two equations imply that the random process, $F(t),$ is a stationary random process in the wide sense \cite{chorin2009stochastic}, and it also satisfies the second fluctuation-dissipation theorem (FDT). It is a necessary condition for the solution to have the correct variance \cite{Kubo66}. The last condition suggests that the random force and the initial value of $a$ are uncorrelated. The last two properties have also been discussed  in the Mori's original paper \cite{mori1965b}. 

Using the first property, one can take an average of the GLE \eqref{eq: GLE}, and arrive at a deterministic system,
\begin{equation}
\dot{\left<a\right>}(t) = \Omega \left<a\right>(t) - \int_{0}^{t} \theta(t-s)\left<a\right>(s) ds,
\label{eq: GLE-avg}
\end{equation}
which is a set of integral-differential equations describing the time evolution of the average of the quantity of interest \cite{ChKaKu98}. This is often seen as a particular advantage of Mori's projection. However, in this paper, we are more concerned with the quantity $a$ with fluctuation. 

\subsection{An explicit representation of the random noise}
In the general MZ formalism, the random noise has been expressed in a quite abstract form. The practical implementation is rather difficult in general. Here, we provide a more detailed characterization of the random noise, by embedding it in  an infinite system of ordinary differential equations. Important properties can also be deducted from these differential equations. 

Suppose that $\{\mathcal{L}^{j}a\}_{j\ge 0}$ are linearly independent, by inspecting the first few terms in the exponential operator, one finds that the random force term can be written as,
\begin{equation}\label{eq: exp-F}
F(t) = \sum_{j\geq 0} C_{j}(t)\mathcal{L}^{j}a.
\end{equation}
Together with the orthogonal dynamics, $\dot{F}(t)=\mathcal{QL}F(t)$, we can derive a set of equations for the coefficients,
\begin{equation}
\begin{aligned}
\dot{C}_j(t) = &C_{j-1}(t), \quad j \ge 1, \\
\dot{C}_0(t) =&- \sum_{j\ge0} C_j (t)M_{j+1}M^{-1},
\end{aligned}
\label{eq: odeC}
\end{equation}
where $M_{j}$ are referred to as the {\it moments} associated with the statistics of $a$, defined as follows,
\begin{equation}
 M_j \overset{\Delta}{=} \left<\mathcal{L}^ja,a\right>=\left<
 \frac{d^j}{dt^j}a,a\right>.
\end{equation}
The initial condition is given by, 
\begin{equation}
C_{0}(0)=-M_{1}M^{-1}, \; C_{1}(0)=I \text{ and } C_j(0)=0, \; \forall j\ge 2.
\end{equation}
Therefore, the random noise can be characterized via an infinite set of ordinary differential equations.

\subsection{Properties of the kernel function}
Thanks to the explicit representation of the random force and the FDT \eqref{eq: prop-F}, certain values of the memory kernel can be reconstructed or approximated using the equilibrium properties. In this section, we will explain how the connections can be made.

Direct calculations yield,
\begin{equation}
\theta(t) = -\sum_{j\ge 0}C_{j}(t) M_{j+1}M^{-1},
\end{equation}
where $C_{j}(t)$ are the coefficients of $F(t)$ in the expansion \eqref{eq: exp-F} and are given by the solution of \eqref{eq: odeC}. The derivatives of $\theta(t)$ at $t=0$ can be written out explicitly, as shown by the following theorem, which can be proved by direct substitutions.

\begin{theorem}
$C_{0}^{(k)}(0)$ can be computed recursively, and they satisfy,
\begin{equation}
C_{0}^{(k)}(0) = -\sum_{j=-1}^{k-1}C_{0}^{(j)}(0)\overline{M}_{k-j-1}, \;\; k\ge 0.
\end{equation}
Furthermore, the derivatives of the memory kernel are given by,
\begin{equation}\begin{aligned}
\theta^{(k)}(0) = -\sum_{j=-1}^{k} C_{0}^{(j)}(0)\overline{M}_{k-j+1}, \;\; k\ge 0,
\label{eq: dtheta}
\end{aligned}\end{equation}
where  we defined $C_{0}^{(-1)}\overset{\Delta}{=}C_{1}$ and $\overline{M}_{j}\overset{\Delta}={M}_{j}M^{-1}$.

\end{theorem}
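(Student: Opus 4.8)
The plan is to prove both identities by direct computation from the ODE hierarchy \eqref{eq: odeC}, organized around a single structural observation. Iterating $\dot C_j = C_{j-1}$ for $j \ge 1$ gives $C_j^{(\ell)}(t) = C_{j-\ell}(t)$ whenever $0 \le \ell \le j$, while for $\ell > j$ the next derivative is instead governed by the second line of \eqref{eq: odeC}. Extending the convention $C_0^{(-1)} := C_1$ to $C_0^{(-m)} := C_m$ for every $m \ge 1$, this reads uniformly as $C_j^{(\ell)} = C_0^{(\ell-j)}$ for all $j,\ell \ge 0$, and the initial data collapse to $C_0^{(-1)}(0) = I$, $C_0^{(-m)}(0) = 0$ for $m \ge 2$, with $C_0^{(i)}(0)$ for $0 \le i < k$ being the already-computed lower-order derivatives. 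These are the only properties of the $C_j$ that will enter.

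For the recursion for $C_0^{(k)}(0)$, I would differentiate $\dot C_0(t) = -\sum_{j\ge 0} C_j(t)\,\overline{M}_{j+1}$ a total of $k-1$ times, obtaining $C_0^{(k)}(t) = -\sum_{j\ge 0} C_j^{(k-1)}(t)\,\overline{M}_{j+1}$, then replace $C_j^{(k-1)}$ by $C_0^{(k-1-j)}$ using the structural observation and relabel the summation index by $i := k-1-j$, so that $\overline{M}_{j+1} = \overline{M}_{k-i}$ and the sum runs over $i \le k-1$. Evaluating at $t=0$, every term with $i \le -2$ drops out since $C_0^{(i)}(0) = C_{-i}(0) = 0$ there, and what survives is a finite sum over $i = -1, 0, \dots, k-1$ --- the asserted recursion.

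For the derivatives of the kernel, I would start from the explicit representation $\theta(t) = -\sum_{j\ge 0} C_j(t)\,\overline{M}_{j+1}$ recorded just above the theorem, differentiate $k$ times, and run the same substitute-and-relabel step; evaluation at $t=0$ truncates the sum --- now to $j = -1, \dots, k$ --- and yields \eqref{eq: dtheta}. A useful sanity check is that $\theta(t) = \dot C_0(t)$ identically (compare \eqref{eq: odeC} with the displayed formula for $\theta$), so the two identities must be compatible under $k \mapsto k+1$. The one real difficulty is combinatorial rather than conceptual: one must keep the three index operations mutually consistent --- passing from $C_j^{(\ell)}$ to $C_0^{(\ell-j)}$, relabeling the summation variable, and truncating the originally infinite sum once the initial conditions annihilate the negative-index tail --- so that the surviving range lands exactly at the stated endpoints with the correct subscript on $\overline{M}$. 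A secondary technical remark worth one line: term-by-term differentiation of the series for $F(t)$, hence for $\theta$, is legitimate because one may regard \eqref{eq: odeC} itself as the definition of the $C_j$ and treat the series as formal --- which is precisely the sense in which the statement "can be proved by direct substitutions."
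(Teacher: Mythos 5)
Your route is exactly the ``direct substitution'' the paper has in mind (the paper prints no proof beyond that phrase): iterate $\dot C_j = C_{j-1}$ to get $C_j^{(\ell)} = C_0^{(\ell-j)}$ under the convention $C_0^{(-m)}:=C_m$, differentiate the two series term by term, relabel the index, and truncate at $t=0$ using $C_1(0)=I$ and $C_m(0)=0$ for $m\ge 2$. Your treatment of the second identity is correct and lands exactly on \eqref{eq: dtheta}. For the first identity, however, you have not actually arrived at what the theorem asserts, and the phrase ``the asserted recursion'' papers over a real discrepancy: your (correct) relabeling $i=k-1-j$ turns $\overline{M}_{j+1}$ into $\overline{M}_{k-i}$, so what you derive is
\[
C_0^{(k)}(0) = -\sum_{i=-1}^{k-1} C_0^{(i)}(0)\,\overline{M}_{k-i},
\]
whereas the theorem prints $\overline{M}_{k-j-1}$. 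The two differ by a unit shift in the subscript, and the printed version is in fact inconsistent: at $k=0$ it gives $C_0(0)=-C_1(0)\overline{M}_0=-I$ (since $\overline{M}_0=M_0M^{-1}=I$), contradicting the initial condition $C_0(0)=-\overline{M}_1$; at $k=1$, using the correct $C_0(0)=-\overline{M}_1$, it gives $C_0'(0)=-\overline{M}_1+\overline{M}_1\overline{M}_0=0$, whereas the ODE itself gives $\dot C_0(0)=\overline{M}_1^2-\overline{M}_2$, which is what the paper's displayed $\theta(0)$ requires. Your own sanity check $\theta(t)=\dot C_0(t)$ settles which version is right: it forces $\theta^{(k)}(0)=C_0^{(k+1)}(0)$, and \eqref{eq: dtheta} is compatible only with the $\overline{M}_{k-i}$ form. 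So the substance of your argument is sound and in fact exposes an off-by-one typo in the stated recursion; the gap is that you declared agreement with the printed formula instead of noticing, and correcting, the mismatch.
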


Based on the theorem, one can write down the first few derivatives of $\theta(0)$, 
\begin{equation}
\begin{aligned}
  \theta(0)=&  -\overline{M}_2+\overline{M}_{1}^{2} ,\\
  \theta'(0)=& -\overline{M}_3+\overline{M}_{2}\overline{M}_{1}+\overline{M}_{1}\overline{M}_{2}-\overline{M}_{1}^{3},\\
  \theta''(0)=& -\overline{M}_{4}+\overline{M}_{3}\overline{M}_{1}+\overline{M}_{2}^{2}+\overline{M}_{1}\overline{M}_{3}-\overline{M}_{2}\overline{M}_{1}^{2}-\overline{M}_{1}\overline{M}_{2}\overline{M}_{1}-\overline{M}_{1}^{2}\overline{M}_{2}+\overline{M}_{1}^{4}, \\
   \cdots  \\
\end{aligned}
\end{equation}

This routine allows us to express the values of the kernel function at $t=0$ in terms of the {\it equilibrium} statistics of the CG variables. The approximation scheme in the next section takes advantage of these properties. 

\subsection{Markovian embedding -- a systematic approximation of the memory term}

A well known practical issue associated with the solution of the GLE is that computation of the memory term. Clearly, a direct evaluation of the integral requires the storage of the solutions from all previous steps, and such evaluations have to be carried out at every time step. To alleviate such effort, we will use the Markovian embedded technique and approximate the memory term via an extended system of equations \cite{lei2016data}. The idea is to incorporate the aforementioned values of the kernel function  into the Laplace transform of $\theta$. More specifically, we define, 
\begin{equation}
 \Theta(\lambda)= \int_0^{+\infty} \theta(t) e^{-t/\lambda} dt.
\end{equation}

As $\lambda \to 0+,$ using integration by parts repeatedly, we find that 
\begin{equation}
 \Theta(\lambda) = \lambda \theta(0) +  \lambda^2 \theta'(0) +  \lambda^3 \theta''(0)  + \lambda^4 \theta'''(0) + \cdots.
\end{equation}

We now turn to the limit as $\lambda \to +\infty$, which embodies long-time behavior of the kernel function. For this calculation, we start with the GLE \eqref{eq: GLE}, multiply both side by $a^{\intercal},$ and take the average. Let $M(t)=\left<a(t),a\right>,$ then we have,
\begin{equation}
  \dot{M}(t)= \Omega M(t) - \int_0^t \theta(t-s) M(s)ds.
  \label{eq: odeM}
\end{equation}

Let $\wt{M}(\lambda)$ be the Laplace transform of $M(t)$. Taking the Laplace transform of \eqref{eq: odeM}, we find,
\begin{equation}
 \frac{1}{\lambda} \wt{M}(\lambda) - M= \Omega \wt{M}(\lambda) - \Theta(\lambda) \wt{M}(\lambda),
\end{equation}
which yields,
\begin{equation}\label{eq: th-inf}
 \Theta(+\infty)= \Omega + M \Big[ \int_0^{+\infty} M(t) dt\Big]^{-1}.
\end{equation}

Again this is related to the statistics of $a$. We now incorporate the values of $\Theta$ from both regimes: $\lambda \to 0+$ and $\lambda \to +\infty.$ Such two-sided approximations, which are similar to the Hermite interpolation problems, have demonstrated promising accuracy over both short and long time scales \cite{lei2016data}.


The idea of the Markovian embedding is to  approximate the memory term by rational functions in terms of the Laplace transform. In general, we can consider a rational function in the following form,
\begin{equation}
 R_{k,k}= \big[I-\lambda B_1 - \cdots - \lambda^k B_{k}\big]^{-1}\big[A_{0}+\lambda A_1 + \lambda^2 A_2 + \cdots + \lambda^k A_{k}\big]. 
\end{equation}
The coefficients in the rational function can be determined based on the values of the kernel functions, e.g., those presented in the previous section.

When $k=0,$ we are led to a constant function, $R_{0,0}=\Gamma$, which, we choose to be given by \eqref{eq: th-inf}:
\begin{equation}
 \Gamma=  \Theta(+\infty).
\end{equation}

In the time domain, this amounts to approximating the kernel function by a delta function:
\begin{equation}
   \int_{0}^{t} \theta(t-s)a(s) ds \approx \Gamma a(t).
\end{equation}
This is often referred to as the Markovian approximation \cite{hijon2006markovian,kauzlaric2012markovian}.  

When $k=1,$ we have,
\begin{equation}
 R_{1,1}(\lambda) = \big[I - \lambda B_1]^{-1} \big[ A_0+\lambda A_1].
\end{equation}

To determine the coefficients, we match the following values,
\begin{equation}\label{eq: R11}
R_{1,1}(0)= \Theta(0), \; R_{1,1}'(0)= \Theta'(0)\; \text{ and }  R_{1,1}(+\infty)= \Theta(+\infty),
\end{equation}
which yield,
\begin{equation}
A_{0}=0,\; A_1= \theta(0) \; \text{ and } B_1= -A_1\Theta(+\infty)^{-1}.
\label{eq: A1B1}
\end{equation}
In the time-domain, this corresponds to an approximation of the kernel function by an matrix exponential $e^{B_{1}t}A_{1}$. 

Meanwhile, if we define the memory term as $z,$ 
\begin{equation}\label{eq: z}
z= \int_{0}^{t} \theta(t-s)a(s) ds,
\end{equation}
we can write down an auxiliary equation,
\begin{equation}
 \dot{z}= A_1 a + B_1 z. 
\end{equation}
This way, the memory term is embedded in an extended dynamical system {\it without} memory.

As the order of the approximation $k$ increases, we obtain an hierarchy of approximations for the memory term, which can be written as a larger extended system of equations \cite{lei2016data,ma2016derivation}.  We will not discuss the higher order approximations in this paper.

\bigskip

It remains to approximate the random noise term. This will be discussed in the next section, along with a specific example of the MD model. 

\section{Application to energy transport} \label{1d}
\subsection{A one-dimensional example}
Let's consider a one-dimensional isolated chain model of $N$ atoms and they are evenly divided into $n$ blocks, each of which contains $\ell$ atoms, as shown in Figure \ref{fig: 1dchain}; $N=n\ell $. We will focus on the study of energy transport between  these blocks.

\begin{figure}[htp]
\begin{center}
\includegraphics[scale=0.9]{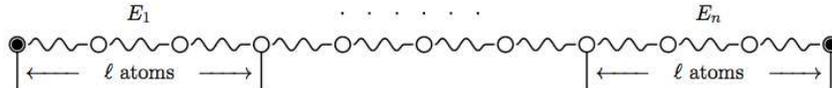}
\caption{1-D chain of particles. Every $\ell$ atoms are grouped into one block. \label{fig: 1dchain}}
\label{fig: const}
\end{center}

\end{figure}

Let $x$ and $v$ be the displacements and velocities respectively, satisfying \eqref{eq: newton2}. Periodic boundary conditions are imposed. Let $S_{I}$ is the index set of $I$-th block, labeled as, $S_{I}=\{\ell (I-1)+1,\cdots,\ell I\}$, $I=1,2,\cdots,n$. Let $\phi(x_{i}-x_{j})$ be the pairwise potential coming from interactions between the $i$th and $j$th atoms. Here, we use the Fermi-Pasta-Ulam (FPU) potential,
\begin{equation}
\phi(r) = \frac{r^{4}}{4}+\frac{r^{2}}{2}. 
\end{equation}
If we only consider the nearest neighbor interactions, the potential energy of this 1-d chain is given by, 
\begin{equation}
V(x) = \sum_{i=1}^{N}\phi(x_{i-1}-x_{i}).
\end{equation}

We define the local energy associated with the $I$-th block as follows,
\begin{equation}
E_{I}(t) = \sum_{i\in S_{I}} \frac{1}{2}mv_{i}^{2} + \frac12\phi(x_{i}-x_{i-1})+\frac12\phi(x_{i+1}-x_{i}).
\label{eq: edef}
\end{equation}

The rate of change of the local energy can be attributed to the energy flux $J$, 
\begin{equation}
\dot{E}_{I}(t) = J_{I+\frac12}(t) - J_{I-\frac12}(t),
\end{equation}
where $J_{I+\frac{1}{2}}$ is the energy flux between the $I$th and $(I+1)$th blocks. Direct calculation yields,
\begin{equation}
J_{I+\frac12} = \frac12 \phi'(x_{\ell I+1}-x_{\ell I}) (v_{\ell I+1} + v_{\ell I}).
\end{equation}
Notice that the energy flux only depends on the atoms next to the interfaces between two adjacent blocks. 

\subsection{The generalized Langevin equation for the energy}
As an application of the Mori's projection method, we define the quantity of interest $a$ as the shifted energy of blocks,
\begin{equation}
a_{I}(t) \overset{\Delta}{=} E_{I}(t) - \left< E_{I}(0)\right>.
\label{eq: adef}
\end{equation}
The subtraction of the average is to ensure that $\left<a\right>=0.$ 

\begin{theorem}
If $a(t)$ is as defined in \eqref{eq: edef} and \eqref{eq: adef}, the odd moments of $a$ vanish, i.e., 
\begin{equation} M_{2j+1}=0, \quad \forall \;j \geq 0. \end{equation}
\label{thm: modd}
\end{theorem}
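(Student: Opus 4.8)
The plan is to exploit the symmetry of the coarse-grained variable $a$ under velocity reversal. First I would observe that each block energy $E_I$ in \eqref{eq: edef} is an \emph{even} function of the velocities: the $v_i$ enter only through the kinetic terms $\tfrac12 m v_i^2$, and the remaining terms depend on the $x_i$ alone. Subtracting the constant $\langle E_I(0)\rangle$ in \eqref{eq: adef} does not affect this, so if we introduce the velocity-reversal map $R$ defined by $(Rg)(x,v) = g(x,-v)$, then $Ra = a$.

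Next I would record two elementary facts about $R$. On one hand, a direct inspection of $\mathcal{L} = v_0\cdot\partial_{x_0} + m^{-1}f(x_0)\cdot\partial_{v_0}$ shows that conjugation by $R$ flips its sign, $R\mathcal{L}R = -\mathcal{L}$: the factor $v_0$ is odd in $v$, $\partial_{v_0}$ sends even functions to odd ones, while $\partial_{x_0}$ and $f(x_0)$ preserve parity in $v$. Iterating, $\mathcal{L}^j a$ has parity $(-1)^j$ under $R$, since $R\,\mathcal{L}^j a = (R\mathcal{L}R)^j (Ra) = (-1)^j \mathcal{L}^j a$. On the other hand, the canonical density $\rho_{eq} = Z^{-1} e^{-\beta H}$ with $H = \tfrac12 m|v|^2 + V(x)$ is invariant under $v \mapsto -v$, and so is Lebesgue measure $dz$; hence the change of variables $v_0 \mapsto -v_0$ leaves the inner product \eqref{eq: minnerproduct} unchanged, $\langle Rf, Rg\rangle = \langle f,g\rangle$.

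Combining these, for every $j \ge 0$,
\begin{equation}
M_j = \langle \mathcal{L}^j a, a\rangle = \langle R\mathcal{L}^j a, Ra\rangle = (-1)^j \langle \mathcal{L}^j a, a\rangle = (-1)^j M_j,
\end{equation}
so $M_j = 0$ whenever $j$ is odd, which is exactly the assertion $M_{2j+1}=0$. Equivalently, one can phrase the same reasoning as: by microscopic reversibility the autocorrelation matrix $M(t) = \langle a(t),a\rangle$ is an even function of $t$ (reversing the trajectory flips $v$ but leaves $a$ unchanged), so all of its odd-order derivatives at $t=0$, which are precisely the $M_{2j+1}$, vanish.

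The argument is short; the only step that genuinely needs care is the sign identity $R\mathcal{L}R = -\mathcal{L}$ and the induced parity $R\,\mathcal{L}^j a = (-1)^j\mathcal{L}^j a$, which I would verify explicitly rather than leave implicit. The other ingredients — evenness of $E_I$ in $v$ and invariance of the canonical measure under velocity reversal — are immediate from the quadratic form of the kinetic energy, and convergence of the integrals defining $M_j$ is not an issue, since the Gaussian factor in $\rho_{eq}$ dominates the polynomial growth of $\mathcal{L}^j a$ in $v$.
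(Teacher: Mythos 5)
Your proof is correct, and it rests on the same physical fact as the paper's --- that each application of $\mathcal{L}$ flips the parity of a function in the velocities, while $a$ is even in $v$ and the canonical measure is invariant under $v\mapsto -v$ --- but you package that fact differently. The paper proves an auxiliary lemma (Lemma \ref{lemma: gf}) stating that the time derivative of a separable function $G(x)F(v)$ is again a finite sum of separable terms with the opposite $v$-parity, and then runs an induction to conclude that $\mathcal{L}^{2j+1}E$ is odd in $v$; it also spends a few lines disposing of the constant shift $\overline{E}$ via $\mathcal{L}\overline{E}=0$. You replace both the lemma and the induction by the single conjugation identity $R\mathcal{L}R=-\mathcal{L}$ for the velocity-reversal operator $R$, which immediately gives $R\,\mathcal{L}^{j}a=(-1)^{j}\mathcal{L}^{j}a$ and hence $M_j=(-1)^jM_j$ from the $R$-invariance of the inner product \eqref{eq: minnerproduct}; the constant shift is absorbed painlessly because $Ra=a$. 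Your route is shorter, avoids any appeal to separability of the observable (which the paper needs only as a bookkeeping device), and generalizes verbatim to any CG variable that is even in $v$ and any reversible equilibrium density; the paper's route is more hands-on and makes the structure of $\mathcal{L}^{j}E$ explicit, which it reuses later when computing the moments $M_j$ and the derivatives of the memory kernel. Your closing remark identifying $M_{2j+1}$ with the odd-order derivatives of $M(t)$ at $t=0$ is a nice consistency check, though as stated it presupposes the differentiations under the integral that your first argument establishes directly.
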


The following lemma can be easily established, as a preparation to prove this theorem, 
\begin{lemma}
If $J(x,v)=G(x)F(v)$ is a scalar, separable function, then there exist $m\ge 1$, $G_{i}$ and $F_{i}$ s.t.
\begin{equation}
\dot{J}(x(t),v(t)) \overset{\Delta}{=} \frac{d}{dt} J(x(t),v(t)) = \sum_{i=1}^{m} G_{i}(x(t))F_{i}(v(t)).
\end{equation} 
Furthermore, if $F(v)$ is an odd function w.r.t.  $v$, then $\dot{J}(x,v)$ is an even function w.r.t $v$. If $F(v)$ is even w.r.t. $v$, then $\dot{J}$ is odd.
\label{lemma: gf}
\end{lemma}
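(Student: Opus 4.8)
\emph{Proof proposal.}
The plan is to compute $\dot J$ directly by the chain rule and read off the separable structure. Applying $\tfrac{d}{dt}$ to $t\mapsto G(x(t))F(v(t))$ and using the equations of motion \eqref{eq: newton2}, so that $\dot x = v$ and $\dot v = f(x)/m$, we obtain
\[
\dot{J} \;=\; \sum_{k} \frac{\partial G}{\partial x_{k}}(x)\, v_{k}\, F(v) \;+\; \sum_{k} G(x)\,\frac{f_{k}(x)}{m}\,\frac{\partial F}{\partial v_{k}}(v).
\]
The key observation is that every summand on the right-hand side is again separable: in the first group the $x$-dependence is carried by $\partial G/\partial x_{k}$ and the $v$-dependence by $v_{k}F(v)$, while in the second group the $x$-dependence is carried by $G\,f_{k}/m$ (recall $f=-\partial V/\partial x$ depends on $x$ only) and the $v$-dependence by $\partial F/\partial v_{k}$. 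Relabelling these at most $2dN$ products as $G_{i}(x)F_{i}(v)$ yields the claimed finite expansion with $m\le 2dN$. This step needs only differentiability of $G$, $F$ and $V$; no estimate is involved.

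For the parity statement I would track how the two operations that act on the velocity variable in the formula above — multiplication by a component $v_{k}$, and differentiation $\partial/\partial v_{k}$ — affect evenness and oddness in $v$. Multiplying by $v_{k}$ flips parity; and differentiating the identity $F(-v)=\pm F(v)$ with respect to $v_{k}$ shows that $\partial F/\partial v_{k}$ also flips parity relative to $F$. Hence, if $F$ is odd, then $v_{k}F(v)$ is even and each $\partial F/\partial v_{k}$ is even, so \emph{all} of the $v$-factors $F_{i}$ appearing in the expansion are even; consequently $\dot J(x,\cdot)$ is even for every fixed $x$. The case where $F$ is even is identical with the roles of ``even'' and ``odd'' interchanged.

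The only point deserving a little care is to phrase the argument so that it can later be iterated, since that is what Theorem \ref{thm: modd} will require: each term $G_{i}F_{i}$ produced is itself separable, and the parity of its $v$-factor has flipped exactly once, so re-applying $\tfrac{d}{dt}$ to a finite sum of separable terms reproduces the same structure with all $v$-parities flipped once more. I do not anticipate any genuine obstacle here — the lemma is a structural statement about how the dynamics acts on separable observables, and its proof is the short computation above followed by the parity bookkeeping just described.
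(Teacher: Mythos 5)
Your proposal is correct, and it is essentially the intended argument: the paper omits the proof entirely (stating only that the lemma ``can be easily established''), and the natural route is exactly your chain-rule computation $\dot J=\mathcal{L}J=\sum_k \partial_{x_k}G\,v_kF+\sum_k (Gf_k/m)\,\partial_{v_k}F$ followed by the observation that both multiplication by $v_k$ and differentiation $\partial_{v_k}$ flip parity in $v$. Your closing remark that each resulting term is again separable with its $v$-parity flipped exactly once is precisely the form needed for the induction in Theorem~\ref{thm: modd}, so the write-up supplies the missing details cleanly.
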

Now let's turn to the proof of Theorem \ref{thm: modd}.
\begin{proof}
Write $ \overline{E} \overset{\Delta}{=} \langle E(0) \rangle$, and consider $j\geq 0$,
\begin{equation}\begin{aligned}
M_{2j+1} 	&= \langle \mathcal{L}^{2j+1}a,a \rangle \\
		&= \langle \mathcal{L}^{2j+1}E(0), E(0) -\overline{E} \rangle -  \langle \mathcal{L}^{2j+1}\overline{E}, E(0) -\overline{E} \rangle \\
		&= \langle \mathcal{L}^{2j+1}E(0), E(0) \rangle + \langle \mathcal{L}^{2j}E(0), \mathcal{L}\overline{E} \rangle \\
		&= \left< \frac{d^{2j+1}}{dt^{2j+1}}E(0), E(0) \right>.
\end{aligned}\end{equation}
By induction and Lemma \ref{lemma: gf}, one can verify that $\frac{d^{2j+1}}{dt^{2j+1}}E(0)$ is an odd function w.r.t $v$. $E(0)$ is even w.r.t $v$, so when we integrate the product over the velocity domain weighted by a Gaussian distribution, we get a zero average. 
\end{proof}

With this result, we can see that the Markovian term  in the GLE \eqref{eq: GLE} must be zero, i.e., $\Omega=0$. Furthermore, we have
\begin{corollary}
The derivatives of the memory function at $t=0$ are given by,
\begin{equation}\begin{aligned}
&C_{0}^{2j}(0) = 0, \quad j\geq 0, \\
&\theta^{(2j)}(0)= -\sum_{i=0}^{j}C_{0}^{(2i-1)}\overline{M}_{2(j+1-i)}, \quad j\geq 0,  \\
&\theta^{(2j+1)}(0) = 0, \quad j\geq 0.\\
\end{aligned}\end{equation}
\end{corollary}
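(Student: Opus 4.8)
The only genuinely new ingredient is Theorem~\ref{thm: modd}: since $M_{2j+1}=0$ for all $j\ge 0$, the normalized moments satisfy $\overline{M}_{2j+1}=M_{2j+1}M^{-1}=0$. Everything else is a parity bookkeeping exercise applied to the identities already in hand. I would first record the observation $\dot{C}_0(t)=\theta(t)$, which is immediate from comparing the equation for $\dot{C}_0$ in \eqref{eq: odeC} with the series representation of $\theta(t)$; consequently $C_0^{(k+1)}(0)=\theta^{(k)}(0)$ for all $k\ge 0$, while $C_0^{(0)}(0)=C_0(0)=-M_1M^{-1}=0$ already by Theorem~\ref{thm: modd}. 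The plan is then (1) prove $C_0^{(2j)}(0)=0$ for all $j\ge 0$ by strong induction on $j$, and (2) read off the two statements about $\theta$ from \eqref{eq: dtheta}.

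For step (1), the case $j=0$ is the computation above. For $j\ge 1$ I would use $C_0^{(2j)}(0)=\theta^{(2j-1)}(0)$ together with \eqref{eq: dtheta} evaluated at $k=2j-1$, which expresses $C_0^{(2j)}(0)$ as a sum of terms $C_0^{(i)}(0)\,\overline{M}_{2j-i}$ with $-1\le i\le 2j-1$, and then split this sum by the parity of $i$: if $i$ is odd (this includes the boundary index $i=-1$, where $C_0^{(-1)}(0)=C_1(0)=I$) then $2j-i$ is odd and $\overline{M}_{2j-i}=0$; if $i$ is even then $0\le i\le 2j-2<2j$, so $C_0^{(i)}(0)=0$ by the inductive hypothesis. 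Every term therefore vanishes, closing the induction. The same parity split applied directly to the recursion for $C_0^{(k)}(0)$ in the preceding theorem (after substituting $\overline{M}_{2j+1}=0$) gives the same conclusion.

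For step (2), the odd-order statement is then immediate: $\theta^{(2j+1)}(0)=C_0^{(2j+2)}(0)=0$ by step (1). For the even-order formula, evaluate \eqref{eq: dtheta} at $k=2j$ (renaming the summation index to $i$), $\theta^{(2j)}(0)=-\sum_{i=-1}^{2j}C_0^{(i)}(0)\,\overline{M}_{2j-i+1}$: every term with $i$ even and $i\ge 0$ drops out because $C_0^{(i)}(0)=0$, leaving only the odd indices $i\in\{-1,1,3,\dots,2j-1\}$, for which $2j-i+1$ is even and the corresponding $\overline{M}$ may be nonzero. Reindexing by $i=2\ell-1$ with $\ell=0,1,\dots,j$ turns $\overline{M}_{2j-i+1}$ into $\overline{M}_{2(j+1-\ell)}$ and yields exactly $\theta^{(2j)}(0)=-\sum_{\ell=0}^{j}C_0^{(2\ell-1)}(0)\,\overline{M}_{2(j+1-\ell)}$, as asserted.

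I do not expect a real obstacle here; it is pure bookkeeping. The points to watch are the index shifts in \eqref{eq: dtheta}, the treatment of the boundary term $C_0^{(-1)}=C_1\neq 0$ (it survives in the even-order sum for $\theta$ and is precisely why that sum starts at $\ell=0$ rather than $\ell=1$), and ordering the induction so that deducing the $\theta$ identities does not become circular with the vanishing of the even-order $C_0^{(k)}(0)$ — which is why I would settle the statements for the $C_0$ coefficients first and only then read off those for $\theta$.
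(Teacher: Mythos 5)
Your proposal is correct and follows exactly the route the paper intends (the corollary is stated without proof, as an immediate consequence of substituting the vanishing odd moments of Theorem~\ref{thm: modd} into the formulas of the preceding theorem and running the parity induction you describe). Your explicit use of the identity $\dot{C}_0(t)=\theta(t)$, hence $C_0^{(k+1)}(0)=\theta^{(k)}(0)$, is a clean way to organize the induction and is consistent with \eqref{eq: dtheta}; note only that the printed recursion for $C_0^{(k)}(0)$ in the paper appears to carry an index typo ($\overline{M}_{k-j-1}$ should read $\overline{M}_{k-j}$, as a check against $C_0(0)=-\overline{M}_1$ and $\theta(0)=-\overline{M}_2+\overline{M}_1^2$ confirms), so your primary argument via \eqref{eq: dtheta} is the reliable one, while the alternative you mention of applying the parity split to that recursion as literally printed would not close.
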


For instance, the first few derivatives of $\theta(0)$ in even orders are listed below, 
\begin{equation}
\begin{aligned}
  &\theta(0)= -\overline{M}_2, \\
  &\theta''(0)= -\overline{M}_{4}+\overline{M}_{2}^{2}, \\
  &\theta^{(4)}(0) = -\overline{M}_{6}+\overline{M}_{4}\overline{M}_{2}+\overline{M}_{2}\overline{M}_{4}+\overline{M}_{2}^{3},\\
   & \cdots \cdots \\
\end{aligned}
\end{equation}

Before we discuss the approximation of the random noise, we first start with a full MD simulation, from which we can obtain the time series  of $a(t)$ and $F(t)$. Our numerical test simulates an equilibrium system containing $500$ atoms which are evenly divided into $n=50$ blocks. The histograms of one entry of $a(t)$ and $F(t)$ at equilibrium are shown in Figure \ref{fig: afpdf}, which can be assumed to be  the exact stationary distributions.  Interestingly, both quantities exhibit non-Gaussian statistics. The PDF of $a(t)$ fits perfectly to a shifted Gamma distribution, and the PDF of the random noise follows a Laplace distribution. See Figure \ref{fig: afpdf}.

\begin{figure}[ht]
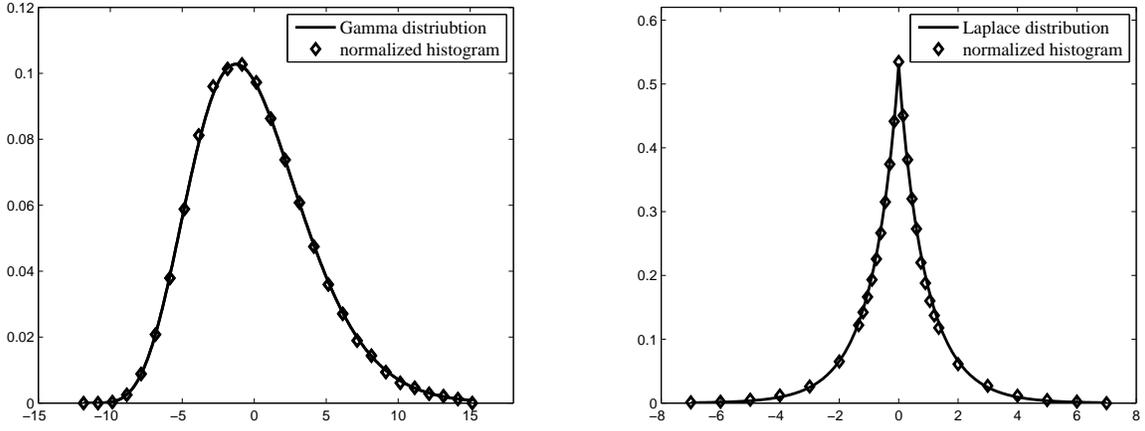

\includegraphics[scale=0.53]{apdf.eps}
\includegraphics[scale=0.53]{fpdf.eps}
\caption{The left figure shows the histogram of $a_{1}(t)$ obtained from direct simulations. The data fits well to a shifted Gamma distribution $\Gamma_{\alpha,\beta}(x) \sim (x+\mu)^{\alpha-1}\exp(-\beta(x+\mu)), \; x\ge -\mu,$ with $\alpha=10.0596, \; \beta=0.7825$ and $\mu=\alpha/\beta$. The right figure shows the histogram of $F_{1}(t)$ and it fits a Laplace distribution $\text{Lap}_{b}(x)\sim\exp(-b|x|)$ with $b=1.0801$. 
}
\label{fig: afpdf}
\end{figure}

In the following section, we will focus on the approximation of the noise term so that the solution of the reduced models gives consistent statistics of the CG energy variables.

\subsection{Approximation of the noise}

In the previous section, we have discussed how the memory term can be approximated using the rational approximation in terms of the Laplace transforms. In particular, it gives rise to deterministic (or drift) terms in the resulting approximate models. Here, we discuss the approximation of the noise. We will consider both additive and multiplicative noises.

\subsubsection{Approximations by additive noise}

A natural (and most widely used) approximation is by a Gaussian white noise. For instance, for the first order approximation, we are led to a linear SDE,
\begin{equation}
   \dot{{a}}(t)  = -\Gamma {a}(t) + \sigma \zeta(t),
   \label{eq: adsde0}
\end{equation} 
where $\zeta(t)$ is the standard Gaussian-white noise,
\begin{equation}
  \left< \zeta(t_1), \zeta(t_2)^\intercal\right> = \delta(t_1-t_2). 
\end{equation}

In order for the solution $a$ to have the correct covariance $M$, the parameter $\sigma$ has to satisfy the Lyapunov equation, \cite{risken1989fpe}
\begin{equation}
\Sigma \overset{\Delta}{=} \sigma \sigma^\intercal = \Gamma M + M \Gamma^\intercal.
\end{equation}

On the other hand, with the rational  approximation \eqref{eq: R11} of the kernel function, we may introduce noise via the second equation. Namely,
\begin{equation}
\left\{
\begin{aligned}
  \dot{{a}}(t)  = & - z(t), \\
  \dot{z}(t)  = & B_1 z(t) + A_1 {a}(t) + \sigma\zeta(t).
\end{aligned}
\right.
\label{eq: adsde1}
\end{equation}

It is clear that the second equation can be solved explicitly and substituted into the first equation, which would yield a similar equation to the GLE \eqref{eq: GLE}. By choosing the initial condition $z(0)$ and $\Sigma$ appropriately, the approximations to the memory kernel and the random noise can be made consistent, in terms of the second FDT \eqref{eq: prop-F}.

\begin{theorem}
  Assuming the covariance of $z(0)$ is $A_1$, and
  \begin{equation}
    B_1 A_1 + A_1 B_1^\intercal + \Sigma=0,
  \end{equation}
  then, the extended system is equivalent to approximating the kernel function by $\theta_1(t)=e^{tB_1}A_1$, and the approximate noise, denoted by $F_1(t)$, to $F(t)$ satisfies the second FDT exactly. Namely, 
  \[\theta_1(t -t') =  \left<F_1(t),F_1(t') \right>M^{-1}, \quad \forall\; t,t'\ge0.\]
    
\end{theorem}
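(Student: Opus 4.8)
\emph{Proof strategy.} The plan is to eliminate $z$ from the extended system, rewrite the result as an ordinary GLE, and read off the kernel $\theta_1$ and the force $F_1$ directly. First I would apply the variation-of-constants formula to the linear equation $\dot z = B_1 z + A_1 a + \sigma\zeta$, which gives
\begin{equation}
z(t) = e^{tB_1}z(0) + \int_0^t e^{(t-s)B_1}A_1\,a(s)\,ds + \int_0^t e^{(t-s)B_1}\sigma\,\zeta(s)\,ds .
\end{equation}
Substituting this into $\dot a(t)=-z(t)$ yields
\begin{equation}
\dot a(t) = -\int_0^t e^{(t-s)B_1}A_1\,a(s)\,ds + F_1(t), \qquad F_1(t) := -\,e^{tB_1}z(0) - \int_0^t e^{(t-s)B_1}\sigma\,\zeta(s)\,ds ,
\end{equation}
i.e.\ a GLE with vanishing Markovian term and memory kernel $\theta_1(t)=e^{tB_1}A_1$. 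This is exactly the claimed matrix-exponential approximation of $\theta$, so the first half of the statement is immediate; the work is in the second FDT.

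Next I would compute the two-time correlation of $F_1$ under the natural compatibility assumption that $z(0)$ has mean zero and is independent of the white noise $\zeta$. The cross terms then vanish, and using the white-noise correlation $\left<\zeta(s),\zeta(s')^\intercal\right>=\delta(s-s')I$ to collapse the double time integral, one finds, for $t\ge t'$,
\begin{equation}
\left<F_1(t),F_1(t')^\intercal\right> = e^{tB_1}\,\mathrm{Cov}\big(z(0)\big)\,e^{t'B_1^\intercal} + \int_0^{t'} e^{(t-s)B_1}\,\Sigma\,e^{(t'-s)B_1^\intercal}\,ds = e^{(t-t')B_1}\,P(t') ,
\end{equation}
where $P(\tau) := e^{\tau B_1}\mathrm{Cov}(z(0))e^{\tau B_1^\intercal} + \int_0^{\tau} e^{(\tau-s)B_1}\Sigma e^{(\tau-s)B_1^\intercal}\,ds$.

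The key step is the observation that $P(\tau)$ is precisely the solution of the matrix Lyapunov differential equation $\dot P = B_1 P + P B_1^\intercal + \Sigma$ with $P(0)=\mathrm{Cov}(z(0))$ --- a one-line differentiation under the integral sign. The hypothesis $B_1 A_1 + A_1 B_1^\intercal + \Sigma = 0$, together with the prescribed value of $\mathrm{Cov}(z(0))$, says exactly that this initial datum is a \emph{stationary point} of the Lyapunov flow; hence, by uniqueness of solutions, $P(\tau)$ is constant in $\tau$. Therefore $\left<F_1(t),F_1(t')^\intercal\right>=e^{(t-t')B_1}\,P(0)$, which is a constant matrix multiple of $\theta_1(t-t')=e^{(t-t')B_1}A_1$; matching the two expressions fixes the normalization of $\mathrm{Cov}(z(0))$ and yields $\theta_1(t-t')=\left<F_1(t),F_1(t')\right>M^{-1}$ for $t\ge t'$, the range $t<t'$ following from the transpose symmetry $\left<F_1(t),F_1(t')\right>=\left<F_1(t'),F_1(t)\right>^\intercal$. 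Since each term of $F_1$ averages to zero, $\left<F_1(t)\right>=0$, so $F_1$ is also wide-sense stationary.

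The main obstacle is not conceptual --- the stochastic manipulations are standard Ornstein--Uhlenbeck algebra --- but rather the careful bookkeeping of the covariance matrix $M$: one must check that the prescription ``$\mathrm{Cov}(z(0))=A_1$'' carries the correct normalization to reproduce $\theta_1(t-t')M^{-1}$ exactly (and to make the $t<t'$ case consistent), which relies on the fact that for the energy coarse-grained variables $M$ commutes with $A_1=\theta(0)$ and with $B_1=-A_1\Theta(+\infty)^{-1}$, so that $M$ can be moved freely through the exponentials when identifying $P(\tau)$ with $\theta_1(\tau)M$. The single genuinely new ingredient is recognizing that the prescribed initial covariance is a fixed point of the Lyapunov flow, which is what converts the otherwise non-stationary Ornstein--Uhlenbeck force into a stationary process satisfying the fluctuation--dissipation relation.
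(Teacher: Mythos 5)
The paper does not actually prove this theorem --- it defers to the reference \cite{ma2016derivation} --- so there is no in-paper argument to compare against. Your strategy is the standard one (and almost certainly the one used in that reference): eliminate $z$ by variation of constants to exhibit the extended system as a GLE with kernel $e^{tB_1}A_1$ and Ornstein--Uhlenbeck-type force $F_1$, then observe that the prescribed initial covariance is a fixed point of the Lyapunov flow $\dot P = B_1P+PB_1^\intercal+\Sigma$, so that $\left<F_1(t),F_1(t')\right>=e^{(t-t')B_1}P(0)$ is wide-sense stationary. Given the implicit (and reasonable) assumptions you state --- $z(0)$ mean zero and independent of $\zeta$ --- this part is correct and complete, and the identification of the fixed point of the Lyapunov flow is exactly the right key observation.

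The one point that does not close is the $M^{-1}$ normalization, and your proposed repair does not work as stated. Your computation gives $\left<F_1(t),F_1(t')\right>=e^{(t-t')B_1}\,\mathrm{Cov}\big(z(0)\big)$, which with $\mathrm{Cov}(z(0))=A_1$ as in the hypothesis equals $\theta_1(t-t')$ itself; the claimed identity $\theta_1(t-t')=\left<F_1(t),F_1(t')\right>M^{-1}$ would then force $A_1M^{-1}=A_1$. Commutativity of $M$ with $A_1$ and $B_1$ (which does hold for the periodic uniform chain, where all these matrices are circulant) lets you move $M$ through the exponential but still leaves $e^{(t-t')B_1}A_1M^{-1}$ rather than $e^{(t-t')B_1}A_1$, so it cannot supply the missing factor of $M$. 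The internally consistent version of the hypotheses is $\left<z(0)z(0)^\intercal\right>=A_1M$ together with $B_1(A_1M)+(A_1M)B_1^\intercal+\Sigma=0$; with that reading your Lyapunov fixed-point argument goes through verbatim and yields the second FDT exactly (for $t\ge t'$; the opposite ordering requires the additional symmetry convention $\theta_1(-\tau)=M\theta_1(\tau)^\intercal M^{-1}$, which is worth stating rather than attributing to transposition alone). In short: the calculation is right, the mismatch lies in the theorem's literal normalization of $\mathrm{Cov}(z(0))$, and it should be resolved by adjusting that hypothesis, not by a commutation argument.
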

The proof of this theorem can be found in \cite{ma2016derivation}.

The approximation by additive noises inevitably leads to a Gaussian distribution for $a(t)$ \cite{risken1989fpe}. To check the validity of this assumption, we ran a direct simulation based on both the zeroth order model \eqref{eq: adsde0} and the first order model \eqref{eq: adsde1}.  
The solutions are then compared to the true statistics, and the results are displayed in  Figure \ref{fig: addcorr}. 
\begin{figure}[htp]
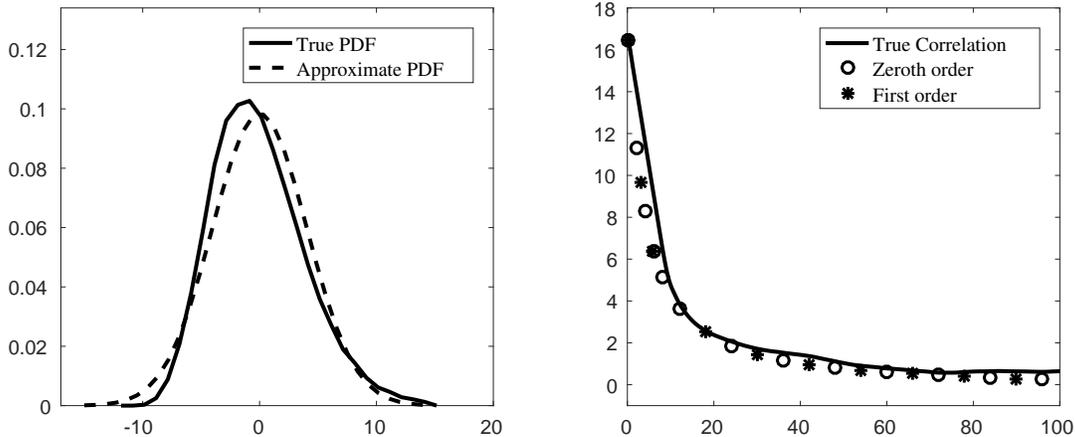

\begin{center}
\includegraphics[width=0.45\textwidth]{addpdf.eps}
\includegraphics[width=0.45\textwidth]{addcorr.eps}
\caption{The left figure shows the true PDF of $a_1$ and approximate PDF from approximations with Gaussian additive noise. The right panel shows the time correlation of approximate models.}
\label{fig: addcorr}
\end{center}
\end{figure}

From Figure \ref{fig: addcorr}, we observe that although the time correlation of the CG energy is well captured, the PDF deviates from the true distribution. 


\subsubsection{Approximations using multiplicative noise}
As alluded to in the previous section, the approximate model driven by Gaussian additive white noise may not capture the correct PDF. In this section, we consider multiplicative noise, with the objective of enforcing the correct equilibrium statistics for the solution of the SDEs. 
 
We start with a further observation that the energy of each block is almost independent to each other. Figure \ref{fig: pdfs} shows the agreement between the joint histogram of the energy of two adjacent blocks and the true marginal PDFs.  It is interesting that same observations have been made for biomolecules \cite{faure2017entropy}. It is clearly difficult to prove the exact independence theoretically. Therefore, we keep this as our main assumption, and postulate the stationary PDF ($\rho(a)$) of the energy as a shifted multi-Gamma distribution with parameters $\alpha_i$ and $\beta_i$,
\begin{equation}\label{multi-gamma}
\rho(a) \propto \prod_{i=1}^{n} \left({a}_{i}-\frac{\alpha_{i}}{\beta_{i}}\right)^{\alpha_{i}-1}\!\!\!\!\!e^{-\beta_{i}\left({a}_{i}+\alpha_{i}/\beta_{i}\right)}.
\end{equation}

\begin{figure}[htp]
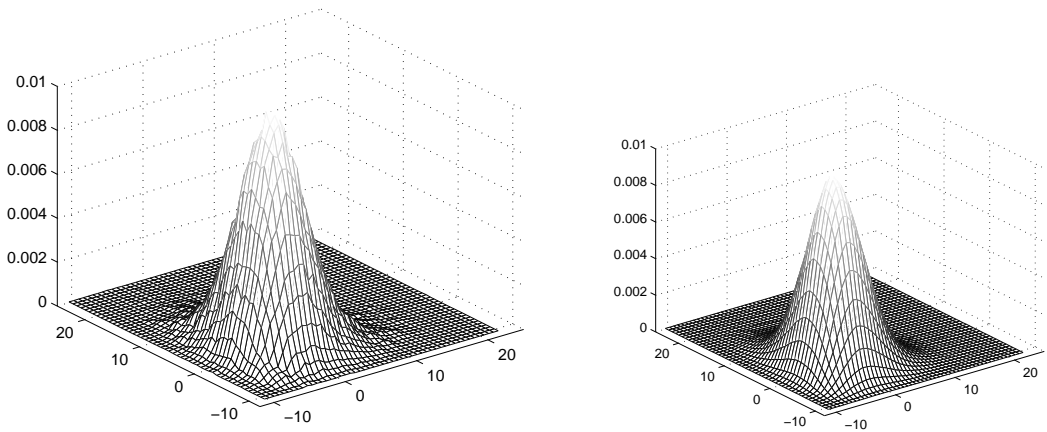

\includegraphics[scale=0.6]{jointpdf.eps}
\includegraphics[scale=0.5]{marpdf.eps}
\caption{The left figure shows the normalized joint histogram of $a_{1}$ and $a_{2}$ and the right panel shows the product of two normalized individual histogram of $a_{1}$ and $a_{2}$. 
}
\label{fig: pdfs}
\end{figure}

Now we reconsider the zeroth order approximation, the Markovian approximation by $R_{0,0}=\Theta(+\infty)=\Gamma.$ With a multiplicative noise, we are solving the following SDE,
\begin{equation}
 \dot{a}(t) = -\Gamma a(t) + \sigma(a(t)) \xi(t),
 \label{eq: osde}
\end{equation}
where $\xi(t)$ is the again standard Gaussian white noise. The SDE is interpreted in the It\^{o} sense. 

To derive a simple formula, we seek $\sigma$ in a diagonal form.  We aim to construct $D\overset{\Delta}{=}2\sigma\sigma^{\intercal}$ to ensure the desired PDF given by \eqref{multi-gamma}. By simplifying the Fokker-Planck equation (FPE) that corresponds to \eqref{eq: osde}, we obtain,
\begin{equation}
\frac{\partial D_{ii}\rho}{\partial a_{i}}= -\rho\left( \Gamma_{ii}a_{i}+\sum_{j=1,j\neq i}^{n} \Gamma_{ij}a_{j}\right).
\label{eq: Dii1}
\end{equation}

By directly solving these differential equations, we obtained an explicit formula for the matrix $D$, as summarized in the following theorem, which can be proved by direct integration of \eqref{eq: Dii1}. 

\begin{theorem}
If $\Gamma$ is a Z-matrix, i.e., the off-diagonal entries are non-positve, and $\Gamma$ is semi-positive definite, then there exists a diagonal matrix $D$ for which the multi-Gamma distribution  \eqref{multi-gamma} is a steady state solution of the Fokker-Planck equation. The diagonals of $D$ are given by,
\begin{equation}
D_{ii} = \frac{\Gamma_{ii}}{\beta_{i}}\left( a_{i}+\frac{\alpha_{i}}{\beta_{i}}\right) -\sum_{j=1, j\neq i}^{m}\Gamma_{ij} \frac{a_{j}\int_{-\frac{\alpha_{i}}{\beta_{i}}}^{a_{i}}\rho_{i}(x)dx+ \alpha_{j}/\beta_{j}}{\rho_{i}(a_{i})} \geq 0,
\end{equation}
where $\rho_{i}$ is the marginal PDF of $a_{i}$ and $\rho_{i}$ is propontional to $\left({a}_{i}+\frac{\alpha_{i}}{\beta_{i}}\right)^{\alpha_{i}-1}\!\!\!\!\!e^{-\beta_{i}\left({a}_{i}+\alpha_{i}/\beta_{i}\right)}$. 
\end{theorem}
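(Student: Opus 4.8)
The plan is to read off $D$ by integrating the first–order ODE \eqref{eq: Dii1} in the single variable $a_i$, with all the other coordinates held fixed, from the left endpoint $a_i=-\alpha_i/\beta_i$ of the support up to $a_i$. Since the postulated density factorizes, $\rho(a)=\rho_i(a_i)\prod_{k\ne i}\rho_k(a_k)$, and since only the term $\Gamma_{ii}a_i$ in $(\Gamma a)_i=\Gamma_{ii}a_i+\sum_{j\ne i}\Gamma_{ij}a_j$ depends on the integration variable, this integration (after cancelling the common factor $\prod_{k\ne i}\rho_k$) gives
\[
 D_{ii}(a)\,\rho_i(a_i)\;=\;C_i\;-\;\Gamma_{ii}\!\!\int_{-\alpha_i/\beta_i}^{a_i}\!\! x\,\rho_i(x)\,dx\;-\;\Big(\sum_{j\ne i}\Gamma_{ij}a_j\Big)\!\!\int_{-\alpha_i/\beta_i}^{a_i}\!\!\rho_i(x)\,dx ,
\]
with $C_i$ a constant of integration still free to choose (it may depend on the frozen coordinates). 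The one nontrivial ingredient is the first–moment integral of the shifted Gamma marginal: putting $u=x+\alpha_i/\beta_i$ and using $\frac{d}{du}\big(u^{\alpha_i}e^{-\beta_i u}\big)=\alpha_i u^{\alpha_i-1}e^{-\beta_i u}-\beta_i u^{\alpha_i}e^{-\beta_i u}$, the two incomplete–Gamma pieces cancel exactly because the shift $\alpha_i/\beta_i$ is the mean of the unshifted Gamma law (equivalently, the shift enforced by $\langle a\rangle=0$), leaving
\[
 \int_{-\alpha_i/\beta_i}^{a_i}\! x\,\rho_i(x)\,dx \;=\; -\frac{1}{\beta_i}\Big(a_i+\frac{\alpha_i}{\beta_i}\Big)\rho_i(a_i).
\]
Substituting this, abbreviating $R_i(a_i)=\int_{-\alpha_i/\beta_i}^{a_i}\rho_i$, dividing by $\rho_i(a_i)$, and fixing $C_i=-\big(\sum_{j\ne i}\Gamma_{ij}\alpha_j/\beta_j\big)$ reproduces the claimed formula for $D_{ii}$; this also shows $D_{ii}$ is well defined and finite on the interior of the support (it behaves like $1/\rho_i(a_i)$ near the left endpoint, while $D_{ii}\rho_i$ stays bounded). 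That settles existence.

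For non‑negativity I would argue term by term. Positive semi‑definiteness of $\Gamma$ gives $\Gamma_{ii}=e_i^{\intercal}\Gamma e_i\ge 0$, so with $\beta_i>0$ and $a_i+\alpha_i/\beta_i\ge 0$ on the support the first term $\frac{\Gamma_{ii}}{\beta_i}(a_i+\alpha_i/\beta_i)$ is $\ge 0$. For each $j\ne i$, the Z‑matrix hypothesis gives $-\Gamma_{ij}\ge 0$; since $\rho_i$ is the normalized marginal, $0\le R_i(a_i)\le 1$, and with $a_j\ge-\alpha_j/\beta_j$ this yields $a_jR_i(a_i)\ge\min(0,a_j)\ge-\alpha_j/\beta_j$, hence $a_jR_i(a_i)+\alpha_j/\beta_j\ge 0$. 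As $\rho_i(a_i)>0$ in the interior, each summand $-\Gamma_{ij}\big(a_jR_i(a_i)+\alpha_j/\beta_j\big)/\rho_i(a_i)$ is $\ge 0$, so $D_{ii}\ge 0$ and $\sigma=\sqrt{D/2}$ (diagonal) is a legitimate diffusion coefficient. Finally, because $D$ was built so that \eqref{eq: Dii1} holds, the stationary probability current of the Fokker–Planck equation associated with \eqref{eq: osde} vanishes componentwise, which is precisely the assertion that the multi‑Gamma density \eqref{multi-gamma} is a stationary solution.

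The step I expect to be the real obstacle is pinning down the free constant $C_i$: it must be chosen after the explicit integration so as to \emph{simultaneously} match the claimed closed form and force $D_{ii}\ge 0$ for every admissible $a$. This is exactly what forces the particular combination $a_jR_i(a_i)+\alpha_j/\beta_j$, and it is where both structural hypotheses on $\Gamma$ enter — the Z‑matrix property controls the signs of the off‑diagonal contributions while semi‑definiteness controls the diagonal one, and neither alone suffices. A secondary point needing care is the endpoint $a_i=-\alpha_i/\beta_i$, where $D_{ii}$ is singular; one should check that the product $D_{ii}\rho$ (equivalently the probability current) extends continuously there, so that no spurious boundary flux is generated and the stationarity conclusion is unaffected.
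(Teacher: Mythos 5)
Your proposal is correct and follows exactly the route the paper indicates (the paper only says the theorem "can be proved by direct integration of \eqref{eq: Dii1}", which is precisely what you carry out): the first-moment identity $\int_{-\alpha_i/\beta_i}^{a_i}x\rho_i(x)\,dx=-\tfrac{1}{\beta_i}(a_i+\alpha_i/\beta_i)\rho_i(a_i)$ is the key cancellation, the choice $C_i=-\sum_{j\neq i}\Gamma_{ij}\alpha_j/\beta_j$ matches the stated closed form, and your sign analysis (semi-definiteness for $\Gamma_{ii}\ge 0$, the Z-matrix property plus $0\le R_i\le 1$ and $a_j\ge-\alpha_j/\beta_j$ for the off-diagonal terms) establishes $D_{ii}\ge 0$. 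Your remarks on the unboundedness of $D_{ii}$ near the left endpoint are consistent with the paper's own observation that $\sigma$ is unbounded there.
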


As a verification, we solved the SDEs \eqref{eq: osde} with coefficients determined by the above formula.  Figure \ref{fig: multi} shows the PDF of the first component along with its time correlation. Due to the uniform partition of the system, we expect the statistics to be the same for all the components of $a$. So we will only show the properties of the first component $a_1.$ It is clear that they are both consistent with the truth.  It is worthwhile to point out that the SDE \eqref{eq: osde} contains a diffusion coefficient $\sigma$ which is unbounded. This can be viewed as a mechanism for the energy to stay above a lower bound. However, this introduces a stiff problem for the numerical approximation. To resolve this numerical issue, we applied the implicit Taylor method \cite{Tian2001implicit}.

\begin{figure}[htp]
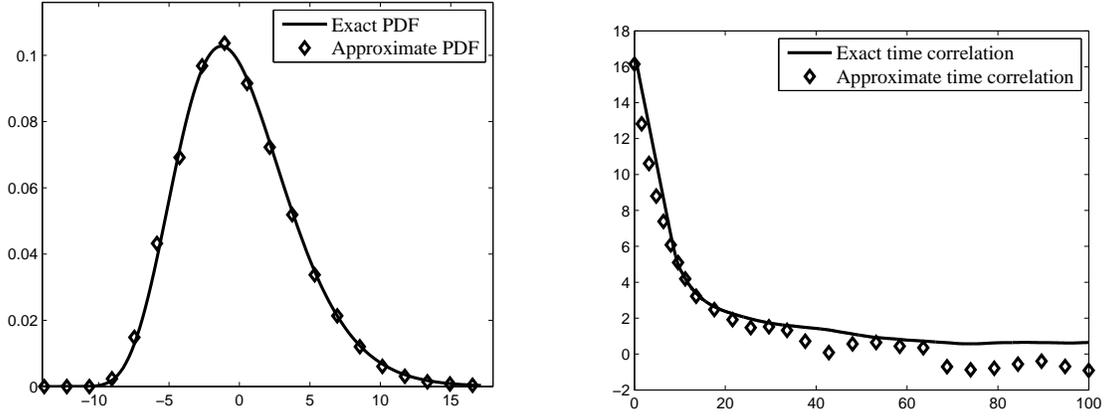

\begin{center}
\includegraphics[scale=0.58]{multipdf.eps}
\includegraphics[scale=0.58]{multicorr.eps}
\caption{The left figure shows the approximate PDF of $a_{1}$ from the multiplicative model \eqref{eq: osde}, along with the exact PDF. The right panel indicates the time correlation function.}
\label{fig: multi}
\end{center}
\end{figure}

Finally  let's turn to the model obtained by the first order approximation of the memory term. With Gaussian multiplicative noise, the first order model  can be written formally as follows,
\begin{equation}
\left\{
\begin{aligned}
  \dot{{a}}(t)  = &- z(t), \\
  \dot{z}(t)  = &  A{a}(t)+Bz(t)  + \sigma(a(t),z(t))\xi(t),
\end{aligned}
\right.
\label{eq: 1stmul}
\end{equation}
where $A=A_{1}$ and $B=B_{1}$ are given in \eqref{eq: A1B1}. This is a Langevin equation. But notice that in the multiplicative noise term, we allowed the diffusion coefficient to depend on both $a$ and $z.$

Again from the results of direct simulations,  we observed that the components of $a$ and $z$ are  independent. So we assume all these components are independent and its joint distribution function is written as,
\begin{equation}
\rho(a,z) \propto \prod_{i=1}^{n} \left({a}_{i}-\frac{\alpha_{i}}{\beta_{i}}\right)^{\alpha_{i}-1}\!\!\!\!\!e^{-\beta_{i}\left({a}_{i}+\alpha_{i}/\beta_{i}\right)}e^{-\gamma_{i}|z_{i}|}.
\label{eq: azpdf}
\end{equation}
Similarly, we assume $\sigma$ to be diagonal and work with the steady state solution of the FPE, which can be written as,
\begin{equation}
\frac{\partial D_{ii}\rho}{\partial z_{i}} = 
	\left( \sum_{j=1}^{n}A_{ij}a_{j} + \sum_{j=1,j\neq i}^{n} B_{ij}z_{j} - \frac{1}{\gamma_{i}^{2}}\frac{\partial W(a)}{\partial a_{i}} \right) \rho
	+ B_{ii}z_{i}\rho
	- \frac{1}{\gamma_{i}}\frac{\partial W(a)}{\partial a_{i}}|z_{i}|\rho.
\label{eq: Diirho}
\end{equation}

By integrating this equation, we have,

\begin{theorem}
In \eqref{eq: 1stmul}, suppose $B$ has non-positive diagonal entries. We have for $i=1,2,\cdots,n$,
\begin{equation} 
\begin{aligned}
D_{ii} = &- \frac{sgn(z_{i})}{\gamma_{i}} \left( 1-e^{\gamma_{i}|z_{i}|} \right)\left( \sum_{j=1}^{n}A_{ij}a_{j} + \sum_{j=1,j\neq i}^{n} B_{ij}z_{j} - \frac{2}{\gamma_{i}^{2}}\frac{\partial W(a)}{\partial a_{i}} \right)+ f_{1}e^{\gamma_{i}|z_{i}|}\\
	&+\frac{1}{\gamma_{i}^{2}}\frac{\partial W(a)}{\partial a_{i}}z_i + f_{2}e^{\gamma_{i}|z_{i}|} + B_{ii}\left( -\frac{1}{\gamma_{i}}|z_{i}| - \frac{1}{\gamma_{i}^{2}} +\frac{1}{\gamma_{i}^{2}}e^{\gamma_{i}|z_{i}|}\right)+f_{3}e^{\gamma_{i}|z_{i}|},
\end{aligned}
\label{eq: Dii}
\end{equation}
where 
\begin{equation}\begin{aligned}
f_{1} &= \frac{1}{\gamma_{i}}\left( \sum_{j=1}^{n}|A_{ij}||a_{j}| + \sum_{j=1,j\neq i}^{n} |B_{ij}||z_{j}| + \frac{2}{\gamma_{i}^{2}}\left(\beta_{i}+\frac{\alpha_{i}-1}{a_{i}+\mu_{i}}\right) \right) ,\\
f_{2} &=  \frac{1}{\gamma_{i}^{3}e}\left( \beta_{i} + \frac{\alpha_{i}-1}{a_{i}+\mu_{i}} \right),\\
f_{3} &= -B_{ii}\frac{1}{\gamma_{i}^{2}},
\end{aligned}\end{equation}
Further, $D_{ii}$ are positive and $\rho$ given in \eqref{eq: azpdf} is a steady state solution of the FPE,   if $\sigma$ is a diagonal matrix and $2\sigma_{ii}^{2}=D_{ii}$.
\end{theorem}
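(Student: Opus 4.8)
The plan is to verify directly that the ansatz \eqref{eq: azpdf} is annihilated by the stationary Fokker--Planck operator of the It\^o system \eqref{eq: 1stmul}, under the diagonal-noise hypothesis and a coordinate-wise zero-flux requirement, and then to extract $D_{ii}$ by solving an elementary first-order ODE. First I would write the stationary FPE,
\[
\sum_{i}\partial_{a_i}\big(z_i\rho\big)-\sum_{i}\partial_{z_i}\big((Aa+Bz)_i\rho\big)+\tfrac12\sum_i\partial_{z_i}^2\big(\sigma_{ii}^2\rho\big)=0,
\]
and substitute the explicit product form of $\rho$: one has $\partial_{a_i}\rho=-(\partial_{a_i}W)\rho$, where $W$ is the potential of the Gamma marginal fixed by \eqref{multi-gamma} (so that $\partial_{a_i}W=\beta_i+(\alpha_i-1)/(a_i+\mu_i)$ with $\mu_i=\alpha_i/\beta_i$), and $\partial_{z_i}\rho=-\gamma_i\,\mathrm{sgn}(z_i)\rho$ for $z_i\neq0$. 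The purely advective transport in $a$, which after this substitution equals $\sum_i z_i(\partial_{a_i}W)\rho$, must be balanced by contributions from the $z$-block; demanding that the balance hold separately for each index $i$ -- i.e. that the probability current vanish in every $z_i$ direction -- collapses the whole identity into the $n$ decoupled scalar equations \eqref{eq: Diirho}, one for each unknown $D_{ii}=2\sigma_{ii}^2$, in which $a$ and the $z_j$ with $j\neq i$ act as frozen parameters.

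The second step is to integrate \eqref{eq: Diirho}. Putting $h_i:=D_{ii}\rho$, the equation reads $\partial_{z_i}h_i=R_i(z_i)\rho$, with $R_i$ collecting the affine-in-$z_i$ drift together with the terms carrying $|z_i|$ and $\partial_{a_i}W$. Since, as a function of $z_i$ alone, $\rho$ is a constant multiple of $e^{-\gamma_i|z_i|}$, the right-hand side is an explicit piecewise-smooth function, and the only primitives needed -- $\int_0^{z_i}e^{-\gamma_i s}ds$, $\int_0^{z_i}s\,e^{-\gamma_i s}ds$, $\int_0^{z_i}|s|e^{-\gamma_i|s|}ds$ -- are elementary. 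Integrating from $z_i=0$ and dividing by $\rho$ to recover $D_{ii}=h_i/\rho$ reintroduces the factor $e^{\gamma_i|z_i|}$; this is the origin of every $e^{\gamma_i|z_i|}$ appearing in \eqref{eq: Dii}, and the single constant of integration $h_i(0)$ is what gets split into the three pieces $f_1,f_2,f_3$. I would perform this split exactly so that each of the natural groups of terms in \eqref{eq: Dii} has a definite sign, and check that the $z_i>0$ and $z_i<0$ branches agree at $z_i=0$ (automatic, since $R_i\rho$ is bounded there).

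The last step is the positivity $D_{ii}\ge0$, and here all the hypotheses enter. The block $B_{ii}\big(-\gamma_i^{-1}|z_i|-\gamma_i^{-2}+\gamma_i^{-2}e^{\gamma_i|z_i|}\big)+f_3e^{\gamma_i|z_i|}$ telescopes (by the choice $f_3=-B_{ii}\gamma_i^{-2}$) to $B_{ii}\big(-\gamma_i^{-1}|z_i|-\gamma_i^{-2}\big)$, which is $\ge0$ precisely because $B_{ii}\le0$; the term $-\gamma_i^{-1}\mathrm{sgn}(z_i)\big(1-e^{\gamma_i|z_i|}\big)\big(\sum_jA_{ij}a_j+\sum_{j\neq i}B_{ij}z_j-2\gamma_i^{-2}\partial_{a_i}W\big)+f_1e^{\gamma_i|z_i|}$ is nonnegative because $|\mathrm{sgn}(z_i)(1-e^{\gamma_i|z_i|})|\le e^{\gamma_i|z_i|}$ while $f_1$ is assembled from the corresponding absolute values $|A_{ij}||a_j|+\sum_{j\neq i}|B_{ij}||z_j|+2\gamma_i^{-2}(\beta_i+(\alpha_i-1)/(a_i+\mu_i))$, together with $\partial_{a_i}W\ge0$ on the support $a_i+\mu_i\ge0$ (using $\alpha_i>1$); and the leftover $\gamma_i^{-2}(\partial_{a_i}W)z_i+f_2e^{\gamma_i|z_i|}$ is dominated by $f_2e^{\gamma_i|z_i|}$ in the same fashion. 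I expect the main obstacle to be precisely this bookkeeping -- choosing the split of $h_i(0)$ so that the three sign estimates all close at once, and carrying the $|z_i|$/$\mathrm{sgn}(z_i)$ branches correctly through the integration -- since the analytic content, namely that a stationary $\rho$ exists at all, is already delivered by the reduction to \eqref{eq: Diirho}; the remaining work is algebraic. One point worth noting along the way is that the resulting $D_{ii}$ grows like $e^{\gamma_i|z_i|}$ and is therefore unbounded, exactly as in the zeroth-order model \eqref{eq: osde}; this is consistent with the construction and, as remarked there, is the mechanism keeping the energy above its lower bound, at the price of stiffness for numerical integration.
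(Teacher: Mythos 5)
Your proposal follows essentially the same route as the paper: reduce the stationary Fokker--Planck equation under the product ansatz \eqref{eq: azpdf} to the componentwise balance \eqref{eq: Diirho}, integrate in $z_i$ against $\rho\propto e^{-\gamma_i|z_i|}$, divide by $\rho$, and split the integration constant into $f_1,f_2,f_3$ so that each group of terms in \eqref{eq: Dii} is separately nonnegative (using $B_{ii}\le 0$ and $\gamma_i e|z_i|\le e^{\gamma_i|z_i|}$). One small correction: for the shifted Gamma marginal one has $\partial W/\partial a_i=\beta_i-(\alpha_i-1)/(a_i+\mu_i)$ rather than $\beta_i+(\alpha_i-1)/(a_i+\mu_i)$; the latter is only the absolute-value bound appearing in $f_1$ and $f_2$, and it is the inequality $|\partial W/\partial a_i|\le\beta_i+(\alpha_i-1)/(a_i+\mu_i)$ (not the sign condition $\partial W/\partial a_i\ge0$ you invoke) that closes the positivity estimate for the corresponding terms.
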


From Figure \ref{fig: mul1}, we see that the first order model \eqref{eq: 1stmul} is able to capture the statistics of both energy $a$ and $z$. Similar to the zeroth order model, the SDEs are stiff, and the implicit Taylor method \cite{Tian2001implicit} with stepsize $\Delta t=5\times 10^{-4} $ is used to generate a long time series to obtain the statistics. 
\begin{figure}[hbtp]
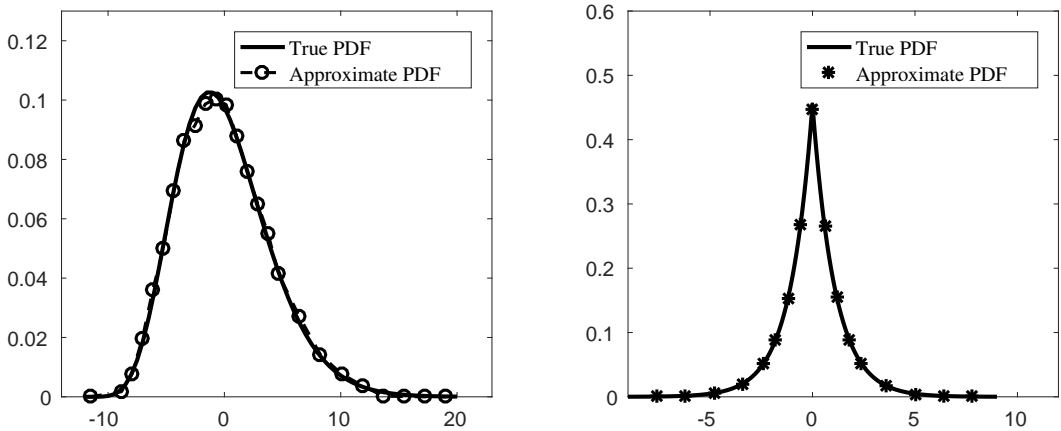

\begin{center}
\includegraphics[width=0.45\textwidth]{mul1-pdf.eps}
\includegraphics[width=0.45\textwidth]{zpdf.eps}
\caption{The left figure shows the comparison of the true PDF of $a_1$ and its approximation obtained from solving the first order SDE model \eqref{eq: 1stmul} using the implicit Taylor method. The right panel shows the PDFs of $z_1$.}
\label{fig: mul1}
\end{center}
\end{figure}

\section{Summary and discussions}

This work is concerned with a coarse-grained energy model directly obtained from the full molecular dynamics model. The goal is to find a more efficient model so that the heat conduction process can be simulated with a model that is a lot cheaper than non-equilibrium molecular dynamics simulations. Our focus has been placed on the equilibrium statistics of such models, which conceptually, is often a good starting point to develop a stochastic model.  Unlike many of the coarse-grained molecular models, in which the coarse-grained velocity is expected to have Gaussian statistics, we found that the coarse-grained energy follows non-Gaussian statistics. We proposed to introduce multiplicative noise, within the Markovian embedding framework for the memory term, to ensure that the solution of the coarse-grained models has the correct equilibrium statistics.  

Although we only considered a one-dimensional model as the example, the framework is applicable to more general systems. In particular,  none of the theorems assumed the space dimensionality. The applications to nano-mechanical systems is currently underway. 

This work suggests a new paradigms for modeling nano-scale transport phenomena in general: Rather than relying on traditional deterministic models, e.g., the Fourier's Law and the Fick's Law, one may derive from first-principle a {\it stochastic and nonlocal} constitutive relation for the mass or heat current. 
The extension of the current effort to general diffusion processes will be in the scope of our future work. 

\bibliographystyle{unsrtnat}

\bibliography{Mori-Heat}
\end{document}